\documentclass[conference]{IEEEtran}
\IEEEoverridecommandlockouts
\usepackage{cite}
\usepackage{amsmath,amssymb,amsfonts}
\usepackage{algorithmic}
\usepackage{graphicx}
\usepackage{textcomp}
\usepackage{xcolor}
\def\BibTeX{{\rm B\kern-.05em{\sc i\kern-.025em b}\kern-.08em
    T\kern-.1667em\lower.7ex\hbox{E}\kern-.125emX}}

\usepackage{amsthm}
\newtheorem{theorem}{Theorem} 
\newtheorem{lemma}{Lemma} %
\usepackage{threeparttable}
\usepackage{multirow}
\usepackage{algorithm,algorithmic}
\usepackage{booktabs} 
\begin{document}

\title{SERE: A Stabilized Element-Wise Method for Downlink Rate Estimation in Clustered Cell-Free Networks\\
}

\author{
\IEEEauthorblockN{Panpan Niu\IEEEauthorrefmark{1}, Han Hao\IEEEauthorrefmark{1}, Hao Wu\IEEEauthorrefmark{1}, and Junyuan Wang\IEEEauthorrefmark{2}
\thanks{The first two authors contributed equally to this work.}
\thanks{$^{\dagger}$Corresponding author.}
}

\IEEEauthorblockA{\IEEEauthorrefmark{1}\textit{Department of Mathematical Sciences, Tsinghua University, Beijing, China}}

\IEEEauthorblockA{\IEEEauthorrefmark{2}\textit{ College of Electronic and Information Engineering,
Tongji University, Shanghai, China}
\IEEEauthorblockA{Email: \{npp21, haoh23\}@mails.tsinghua.edu.cn, hwu@tsinghua.edu.cn, junyuanwang@tongji.edu.cn}
}
}

\maketitle

\begin{abstract}
Clustered cell-free networks have emerged as a promising architecture for sixth generation ultra-dense wireless communication systems by enabling local cooperation among base stations while controlling system complexity.
For resource allocation and performance optimization of such networks, accurate and efficient estimation of the ergodic achievable downlink rate is a fundamental prerequisite.
Existing rate estimation approaches mainly rely on computationally prohibitive Monte Carlo simulations or adopt random matrix theory-based methods, which have been well-developed for conventional cellular and cell-free networks.
However, existing RMT-based methods have not addressed the unique inter-subnetwork interference in clustered cell-free networks, and therefore lack an efficient solution for accurate downlink rate estimation under both regularized zero-forcing and zero-forcing precoding.
In this paper, we propose a stabilized element-wise rate estimation method for downlink rate estimation in clustered cell-free networks. 
We establish the diagonal element-wise convergence of resolvent matrices, which enables the derivation of deterministic equivalents for inter-subnetwork interference and the downlink ergodic rate. We further introduce a stabilized variable transformation to address the numerical instability when the regularization parameter is very small, hereby enabling a unified formulation applicable to both regularized zero-forcing and zero-forcing precoding.
Simulation results show that the proposed method achieves a relative error below 6\% while significantly reducing computational complexity compared with the Monte Carlo simulation.
\end{abstract}

\begin{IEEEkeywords}
clustered cell-free networks, downlink rate, random matrix theory, stabilized element-wise rate estimation
\end{IEEEkeywords}

\vspace{-0.7em}
\section{Introduction}
\vspace{-0.5em}

The growing demand for wireless communications and the increasing user density have revealed critical performance limitations in traditional cellular networks, particularly in terms of providing reliable service to edge users~\cite{gesbert2010multi}.
This issue has driven the exploration of alternative network architectures.
One promising solution is clustered cell-free networks~\cite{wang2022rate,wang2023clustered, zhou2024energy}.
This architecture introduces local cooperation within a decentralized framework: it dynamically organizes base stations (BSs) and users that strongly interfere with each other into the same subnetwork and performs joint transmission in each subnetwork.
This approach reduces system complexity while maintaining high spectral efficiency and user fairness.

When designing and optimizing clustered cell-free networks, the ergodic achievable rate serves as a fundamental performance metric that requires frequent and efficient estimation, especially for resource management problems like power allocation and regularization parameter selection~\cite{wang2022rate}.
Nevertheless, the ergodic achievable rate involves statistical expectation with respect to random small-scale fading channels, making it intractable to derive an explicit closed-form expression directly.
To this end, Monte Carlo simulation~\cite{kalos2009monte} is widely used in practice to approximate the expectation by averaging over numerous independent channel realizations.
Monte Carlo simulation thus suffers from extremely high computational complexity due to the need for large-scale matrix inversion per channel realization, a problem that is unavoidable under both zero-forcing (ZF)~\cite{caire2003zf} and regularized ZF (RZF)~\cite{peel2005rzf} precoding.

To overcome this limitation, random matrix theory (RMT)-based methods~\cite{hachem2007deterministic, hachem2008clt} have emerged as a well-established alternative.
These RMT-based methods have been widely applied in deriving deterministic equivalents of the ergodic rate for various conventional wireless network architectures.
For the uplink, deterministic equivalents of the achievable rate with linear receivers have been derived for non-cooperative multi-cell massive MIMO systems~\cite{hoydis2013massive}.
For the downlink, the deterministic equivalent of the ergodic sum-rate was established for RZF precoding in correlated multiple-input single-output (MISO) broadcast channels with limited feedback~\cite{wagner2012large}.
These results were later extended to cooperative multi-cell downlink systems~\cite{zhang2013large} and cell-free massive MIMO architectures~\cite{wang2020performance, fu2026LPRZF}.
However, extending these RMT results to clustered cell-free networks under RZF and ZF precoding faces challenges. 
Inter-subnetwork interference brings unique cross-subnetwork random terms that do not appear in conventional cellular or fully cooperative cell-free systems, which hinders the derivation of corresponding deterministic equivalents.

Moreover, as the regularization parameter approaches zero, the resolvent matrix arising in the analysis of RZF precoding becomes ill-conditioned~\cite{horn2012matrix}, leading to severe numerical instability~\cite{wagner2012large,wang2020performance}.
Consequently, the deterministic equivalent for ZF precoding cannot be reliably derived from the RZF formulation by simply taking the limit $\alpha \to 0$.

In this paper, we propose the Stabilized Element-wise Rate Estimation (SERE) method for downlink rate estimation in clustered cell-free networks.
Targeting the inter-subnetwork interference, we establish the almost sure diagonal element-wise convergence of resolvent matrices, which complements existing trace-based results.
This allows us to decompose inter-subnetwork interference and derive its deterministic equivalent, as well as the downlink ergodic achievable rate.
We further propose a Stabilized Variable Transformation (SVT) to address the numerical instability in RZF rate estimation, which avoids resolvent matrix ill-conditioning and enables stable evaluation for both RZF and ZF precoding.
Numerical experiments demonstrate that the SERE method delivers fast estimates with a relative error below 6\%, validating its effectiveness.
 
\section{System Model}
Consider the downlink transmission of a clustered cell-free network consisting of $M$ disjoint subnetworks with multiple multi-antenna BSs and single-antenna users in each subnetwork.
Let $\mathcal{B}_m = \{b^m_1,\ldots,b^m_{L_m}\}$ and $\mathcal{U}_m = \{u^m_1,\ldots,u^m_{K_m}\}$ denote the sets of BSs and users associated with subnetwork $m$, respectively, with $L_m = |\mathcal{B}_m|$ and $K_m = |\mathcal{U}_m|$.
Here, $b^m_s$ and $u^m_k$ represent the $s$-th BS equipped with $N^m_{s}$ antennas and the $k$-th user in subnetwork $m$, respectively.
\vspace{-0.25em}
\subsection{Signal Model}
Let $u_k^m$ in subnetwork $m$ be the reference user. 
The received signal $y_k^m$ at user $u_k^m$ is given by
\vspace{-0.75em}
\begin{equation}\label{eq: signal decom}
\begin{aligned}
    y_k^m  = &\underbrace{\mathbf{g}^{m,m}_{k}\mathbf{x}^m_k}_{\text{desired signal}} +\underbrace{\sum_{j=1, j \neq k}^{K_m} \mathbf{g}^{m,m}_{k}\mathbf{x}^m_j}_{\text{intra-subnetwork interference}}
    \\
    & + \underbrace{\sum_{n=1, n \neq m}^{M} \sum_{j=1}^{K_n} \mathbf{g}^{m,n}_k\mathbf{x}^n_j}_{\text{inter-subnetwork interference}}
    + z_k^m,
\end{aligned}
\end{equation}
where $\mathbf{x}^m_{k} \in \mathbb{C}^{N_m\times 1}$ denotes the transmitted signal from the BS set $\mathcal{B}_m$ to user $u_k^m$, written as $\mathbf{x}^m_{k} =\mathbf{w}^m_{k}\cdot s_k^m$.
Here, $N_m=\sum_{s=1}^{L_m}N^{m}_{s}$ is the total number of transmit antennas in subnetwork $m$.
The symbol $s_k^m$ is the information-bearing signal for user $u_k^m$ satisfying $\mathbb{E}[s_k^ms_k^{m\dagger}] = 1$, and $\mathbf{w}^m_k\in \mathbb{C}^{N_m\times1}$ represents the corresponding precoding vector.
The channel gain vector from the BS set $\mathcal{B}_n$ to user $u_k^m$ is denoted by $\mathbf{g}^{m,n}_{k} \in \mathbb{C}^{1 \times N_n}$. Additive white Gaussian noise (AWGN) is denoted as $z_k^m$, which follows the distribution $\mathcal{CN}(0, N_0)$.

The channel gain vector is modeled as
$
    \mathbf{g}^{m,n}_{k}= \mathbf{l}^{m,n}_{k} \circ \mathbf{h}^{m,n}_{k},
$
where $\circ$ denotes the Hadamard product. 
The small-scale fading vector $\mathbf{h}^{m,n}_{k} \in \mathbb{C}^{1 \times N_n}$ consists of i.i.d. entries with complex Gaussian distribution $\mathcal{CN}(0,1)$, and $\mathbf{l}^{m,n}_{k}\in \mathbb{R}^{1 \times N_n}$ represents the large-scale fading vector between user $u_k^m$ and all transmit antennas in subnetwork $n$.
For each transmit antenna belonging to the $s$-th BS $b_s^n$, the corresponding large-scale fading coefficient is given by
\vspace{-0.75em}
\begin{equation*}
l_{k,s}^{m,n} = 
\begin{cases}
(d_{k,s}^{m,n})^{-1.75}, & d_{k,s}^{m,n} > d_1, \\
d_1^{-0.75} (d_{k,s}^{m,n})^{-1}, & d_0 < d_{k,s}^{m,n}  \leq d_1, \\
d_1^{-0.75} d_0^{-1}, & d_{k,s}^{m,n} \leq d_0,
\end{cases}
\end{equation*}
where $d_{k,s}^{m,n}$ denotes the Euclidean distance between user $u_k^m$ and BS $b_s^n$.
The parameters $d_0$ and $d_1$ correspond to the near-field and far-field thresholds, respectively.

Furthermore, the channel matrix between the users in $\mathcal{U}_m$ and the transmit antennas in subnetwork $n$ is denoted as 
$\mathbf{G}^{m,n}=[\mathbf{g}^{m,n\dagger}_{1},\cdots,\mathbf{g}^{m,n\dagger}_{K_m}]^\dagger \in \mathbb{C}^{K_m \times N_n}$.
For notational simplicity, $\mathbf{G}^{m} \triangleq \mathbf{G}^{m,m}$ and $\mathbf{g}^{m\dagger}_{k} \triangleq (\mathbf{g}^{m,m}_{k})^\dagger$ are defined.

\subsection{Achievable Rate}
Assuming that RZF precoding~\cite{peel2005rzf} is adopted in each subnetwork, the precoding matrix $\mathbf{W}^m=[\mathbf{w}^m_{1},\cdots,\mathbf{w}^m_{K_m}]\in\mathbb{C}^{N_m \times K_m}$ for the $m$-th subnetwork is given by
\begin{equation}\label{def: RZF matrix}
    \mathbf{W}^m =\xi_m\mathbf{F}^m = \xi_m \mathbf{G}^{m\dagger} \left( \mathbf{G}^{m} \mathbf{G}^{m\dagger} + N_m\alpha_m \mathbf{I}_{K_m}\right)^{-1},
\end{equation}
where $\alpha_m > 0$ is the regularization parameter.
The normalization factor $\xi_m$ satisfies the power constraint
$tr(\mathbf{W}^m\mathbf{W}^{m\dagger}) \leq P_m$, where $P_m$ is the available transmit power at the BSs in the $m$-th subnetwork.
Under the assumption that the $m$-th subnetwork fully utilizes its total transmit power, i.e., the power constraint is met with equality, the normalization factor $\xi_m$ can be derived as \cite{zhang2013large, wang2020performance}
\begin{equation}\label{eq:xi}
    \xi_m = \sqrt{\frac{P_m}{tr(\mathbf{F}^{m}\mathbf{F}^{m \dagger})}}, \ \forall m = 1, \cdots, M.
\end{equation}

Combining~\eqref{eq: signal decom}, \eqref{def: RZF matrix} and \eqref{eq:xi}, the received signal-to-interference plus noise ratio~(SINR) at user $u_k^m$ is expressed as
\begin{equation}\label{eq: SINR}
    \gamma_k =  \frac{D_k}{\frac{N_0}{\xi^2_m}+ I_{k}^{\text{intra}}+ I_{k}^{\text{inter}}},
\end{equation}
with\footnote{Here, $D_k$ represents the desired signal term of user $u_k^m$,
and $I_{k}^{\text{intra}}$ and $I_{k}^{\text{inter}}$ denote the intra-subnetwork and inter-subnetwork interference terms received at user $u_k^m$.}
\vspace{-0.5em}
\begin{align*}
    D_k&=\mathbf{g}^{m}_{k} \mathbf{f}^m_{k} \mathbf{f}^{m\dagger}_{k} \mathbf{g}^{m\dagger}_{k},\\
    I_{k}^{\text{intra}}&=\sum_{j=1, j \neq k}^{K_m} \ \mathbf{g}^{m}_{k}\mathbf{f}^{m}_{j}\mathbf{f}^{m\dagger}_{j}\mathbf{g}^{m\dagger}_{k},\\
    I_{k}^{\text{inter}}&=\sum_{n=1, n \neq m}^M \frac{\xi_n^2}{\xi_m^2}  \sum_{j=1}^{K_n} \mathbf{g}^{m,n}_{k} \mathbf{f}^n_{j} \mathbf{f}^{n\dagger}_{j} \mathbf{g}^{m,n\dagger}_{k},
\end{align*}
where $\mathbf{f}^{m}_{k}\in \mathbb{C}^{N_m\times 1}$ is the $k$-th column vector of $\mathbf{F}^m$. 

The ergodic achievable rate of user $u_k^m$ is then expressed as
\begin{equation}\label{eq: rate}
    R_k = \mathbb{E}_{\mathbf{H}}\left[\log_2\left(1 +\gamma_k\right)\right].
\end{equation}
The expectation is taken over the small-scale fading matrix $\mathbf{H}$, which covers all users and BSs in the network.

\section{Element-Wise Deterministic Equivalent Analysis}\label{sec:methed}
This section establishes the theoretical results for approximating the downlink ergodic achievable rate in clustered cell-free networks under RZF precoding.
We first establish the almost sure element-wise convergence of the resolvent matrices. Based on this convergence, we derive the deterministic equivalent of the downlink ergodic rate.

We consider the asymptotic regime in which the numbers of transmit antennas $N_m$ and users $K_m$ in each subnetwork approach to infinity while their ratios remain bounded, i.e., $\beta_m=N_m/K_m$
satisfies $0<\lim \inf \beta_m \leq \lim \sup \beta_m < \infty$ for all $m$.
Let $\beta = \sum_{m} N_m / \sum_{m} K_m$ denote the overall transmit-to-receive antenna ratio across the network.
The analysis begins with the resolvent matrices associated with the normalized Gram matrices in each subnetwork:
$\mathbf{Q}^{m}(z)\triangleq( N_m^{-1}\mathbf{G}^{m} \mathbf{G}^{m\dagger} -z\mathbf{I}_{K_m})^{-1}$ and $\tilde{\mathbf{Q}}^{m}(z)\triangleq(N_m^{-1} \mathbf{G}^{m\dagger} \mathbf{G}^{m} - z \mathbf{I}_{N_m})^{-1}$, for $z \in \mathbb{C} \setminus \mathbb{R}^+$.

\begin{lemma}\label{lemma:Q in hachem}
The following limits hold almost surely:\vspace{-0.5em}
\begin{align}
    \lim_{N_m \to \infty,\, N_m/K_m = \beta_m} [\mathbf{Q}^m(z)]_{ii} - \phi_{m,i}(z) &= 0, \label{as:diag Q}\\
    \lim_{N_m \to \infty,\, N_m/K_m = \beta_m} [\tilde{\mathbf{Q}}^m(z)]_{jj} - \psi_{m,j}(z) &= 0,\label{as:diag tilde Q}
\end{align}
where the vectors $\boldsymbol{\phi}_m(z) = [\phi_{m,1}(z), \dots, \phi_{m,K_m}(z)]^\dagger$ and $\boldsymbol{\psi}_m(z) = [\psi_{m,1}(z), \dots, \psi_{m,N_m}(z)]^\dagger$ are the solutions of the following $K_m + N_m$ equations:
\vspace{-0.5em}
\begin{align}
\phi_{m,i}(z)
&= \frac{1}{-z\bigl(1 + N_m^{-1}\sum_{j} \theta_{i,j}^m \psi_{m,j}(z)\bigr)}, \label{iteration phi}\\
\psi_{m,j}(z)
&= \frac{1}{-z\bigl(1 + N_m^{-1}\sum_{i} \theta_{i,j}^m \phi_{m,i}(z)\bigr)}, \label{iteration psi}
\end{align}
for $1\le i\le K_m$ and $1\le j\le N_m$, where $\theta_{k,l}^{m}=(l_{k,l}^{m,m})^2$.
\end{lemma}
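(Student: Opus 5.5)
We will prove the lemma along the lines of Hachem--Loubaton--Najim for variance-profile matrices. Write $\mathbf{G}^m = \boldsymbol{\Sigma}^m\circ\mathbf{H}^m$, where the entries of $\mathbf{H}^m$ are i.i.d. $\mathcal{CN}(0,1)$ and the variance profile is $\theta^m_{i,j}=(l^{m,m}_{i,j})^2$. From the path-loss model, $0<\theta^m_{i,j}\le d_1^{-1.5}d_0^{-2}$, so the profile is uniformly bounded. The argument has three steps: \emph{(i)} show the diagonal entries concentrate around their expectations; \emph{(ii)} show the expectations approximately satisfy the system \eqref{iteration phi}--\eqref{iteration psi}; \emph{(iii)} show the deterministic system is stable, so that approximate solutions are close to the exact one.

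\textbf{Step (i): Schur complements and concentration.} Let $\mathbf{g}_i$ be the $i$-th row of $\mathbf{G}^m$ and $\mathbf{G}_{(i)}$ the matrix with that row removed. The Schur complement formula gives
\[
[\mathbf{Q}^m(z)]_{ii}=\frac{1}{-z\bigl(1+N_m^{-1}\mathbf{g}_i\tilde{\mathbf{Q}}_{(i)}(z)\mathbf{g}_i^\dagger\bigr)},
\]
where $\tilde{\mathbf{Q}}_{(i)}(z)=(N_m^{-1}\mathbf{G}_{(i)}^\dagger\mathbf{G}_{(i)}-z\mathbf{I})^{-1}$. The row $\mathbf{g}_i$ is independent of $\tilde{\mathbf{Q}}_{(i)}$. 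We therefore bound the fourth moment of
\[
N_m^{-1}\mathbf{g}_i\tilde{\mathbf{Q}}_{(i)}\mathbf{g}_i^\dagger-N_m^{-1}\sum_j\theta^m_{i,j}[\tilde{\mathbf{Q}}_{(i)}]_{jj}
\]
by $O(N_m^{-2})$, using the standard quadratic-form moment inequality and $\|\tilde{\mathbf{Q}}_{(i)}\|\le 1/\mathrm{dist}(z,\mathbb{R}^+)$. A rank-one perturbation bound then replaces $\tilde{\mathbf{Q}}_{(i)}$ by $\tilde{\mathbf{Q}}^m$ at cost $O(N_m^{-1})$ in the weighted trace.

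To control the diagonal entries uniformly over $i$, we use a Poincaré--Nash inequality (or a martingale/McDiarmid argument over rows). This gives $\mathrm{Var}([\tilde{\mathbf{Q}}^m]_{jj})=O(N_m^{-2})$ and the same for $[\mathbf{Q}^m]_{ii}$. Borel--Cantelli together with a union bound over the $K_m+N_m=O(N_m)$ indices then yields
\[
\max_i\Bigl|[\mathbf{Q}^m]_{ii}-\frac{1}{-z\bigl(1+N_m^{-1}\sum_j\theta^m_{i,j}\mathbb{E}[\tilde{\mathbf{Q}}^m]_{jj}\bigr)}\Bigr|\to0\quad\text{a.s.}
\]
The symmetric statement for $\tilde{\mathbf{Q}}^m$ follows by expanding along columns.

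\textbf{Step (ii): approximate fixed point.} Set $a_i=\mathbb{E}[\mathbf{Q}^m]_{ii}$ and $b_j=\mathbb{E}[\tilde{\mathbf{Q}}^m]_{jj}$. Taking expectations in Step (i), and using that the denominators have modulus bounded below (the Stieltjes-transform properties give $|z(1+\cdot)|\ge \mathrm{dist}(z,\mathbb{R}^+)$-type bounds), shows that $(\mathbf{a},\mathbf{b})$ satisfies \eqref{iteration phi}--\eqref{iteration psi} up to errors $\varepsilon_N\to0$ uniformly in the indices.

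\textbf{Step (iii): stability, the main obstacle.} Existence and uniqueness of a solution $(\boldsymbol{\phi}_m,\boldsymbol{\psi}_m)$ within the class of Stieltjes transforms of probability measures on $\mathbb{R}^+$ is taken from Hachem et al. The real work is stability: we must show $\max_i|a_i-\phi_{m,i}|\to0$. Subtracting the exact equations from the approximate ones gives a linear system of the form $(\mathbf{I}-\mathbf{A})\boldsymbol{\delta}=\boldsymbol{\varepsilon}$. The entries of $\mathbf{A}$ are $z^2 N_m^{-2}\sum\theta\theta\,a\phi b\psi$-type products. We first handle $z$ in a region where $\mathrm{dist}(z,\mathbb{R}^+)$ is large, for instance $z=-\sigma$ with $\sigma$ large. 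There the bounds $|\phi|,|\psi|\le1/|z|$ and the bounded profile make $\|\mathbf{A}\|_\infty<1$, so we can invert $\mathbf{I}-\mathbf{A}$ with a bounded inverse. We then extend to all $z\in\mathbb{C}\setminus\mathbb{R}^+$ by a normal-families argument: the differences are analytic and uniformly bounded on compacts away from $\mathbb{R}^+$, so Montel's theorem combined with Vitali's theorem propagates convergence from the large-$\sigma$ region to the whole domain. This extension requires a countable dense set of $z$ and a single almost-sure event. Combining Steps (i)--(iii) gives \eqref{as:diag Q}; the identical argument applied to columns gives \eqref{as:diag tilde Q}.
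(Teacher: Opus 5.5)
Your outline is valid in structure, but it is much longer than the paper's proof. The paper stops essentially where your Step (i) ends. From the Schur complement $[\mathbf{Q}^m]_{11}^{-1}=-z-zN_m^{-1}\mathbf{h}_1^m\mathbf{A}_1^m\mathbf{T}(z)\mathbf{A}_1^m\mathbf{h}_1^{m\dagger}$, it uses the trace lemma and the rank-one perturbation lemma to reduce to the weighted trace $N_m^{-1}\operatorname{tr}\bigl((\mathbf{A}_1^m)^2\tilde{\mathbf{Q}}^m\bigr)=N_m^{-1}\sum_j\theta^m_{1,j}[\tilde{\mathbf{Q}}^m]_{jj}$. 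It then invokes the Hachem--Loubaton--Najim trace-level deterministic equivalent: $N_m^{-1}\operatorname{tr}\mathbf{D}\bigl(\tilde{\mathbf{Q}}^m-\operatorname{diag}(\boldsymbol{\psi}_m)\bigr)\to0$ a.s.\ for bounded deterministic diagonal $\mathbf{D}$, here $\mathbf{D}=(\mathbf{A}_1^m)^2$. This gives $[\mathbf{Q}^m]_{11}^{-1}-\phi_{m,1}^{-1}\to0$ directly. The key observation is that a single diagonal entry is a function of one weighted trace with deterministic diagonal weights, so the existing trace-level theory suffices and no new fixed-point analysis is needed. Your Steps (ii)--(iii) instead re-derive that cited result, in the stronger form $\max_j|\mathbb{E}[\tilde{\mathbf{Q}}^m]_{jj}-\psi_{m,j}|\to0$, via the approximate fixed point, a contraction at $z=-\sigma$, and a Vitali extension. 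This is correct and self-contained (only existence and uniqueness are borrowed), and it gives index-uniform control. However, once Step (i) is in place you could simply have cited the trace result and stopped.

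One step in (i) does not go through as written. A single diagonal entry does not have variance $O(N_m^{-2})$. The entry $[\tilde{\mathbf{Q}}^m]_{jj}$ is itself $1/\bigl(-z(1+\text{quadratic form})\bigr)$, and that quadratic form fluctuates at scale $N_m^{-1/2}$, so its variance is genuinely of order $N_m^{-1}$; Poincar\'e--Nash gives exactly that and no better. The $O(N_m^{-2})$ bound holds for the weighted trace $N_m^{-1}\sum_j\theta^m_{i,j}[\tilde{\mathbf{Q}}^m]_{jj}$, which is the quantity you actually need to concentrate. The union bound also fails as stated: $O(N_m)$ indices times either the fourth-moment bound $O(N_m^{-2})$ or a Chebyshev bound from variance $O(N_m^{-2})$ gives $O(N_m^{-1})$ per $N_m$, which is not summable. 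There are two fixes. You can use $2p$-th moment bounds $O(N_m^{-p})$ with $p\ge3$, which are available because the entries are Gaussian (Gaussian concentration for Lipschitz functions also works). Alternatively, drop the maximum altogether. The lemma only needs a.s.\ convergence of each entry, for which your per-index bounds are summable, and Step (ii) only needs $L^1$ bounds uniform in $i$, which require no Borel--Cantelli.
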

\begin{IEEEproof}
    See Appendix~\ref{app: proof Q}.
\end{IEEEproof}
Lemma~\ref{lemma:Q in hachem} establishes convergence of the individual diagonal entries without requiring the $1/N_m$ normalization typically needed to maintain magnitude uniformity.
We further give the convergence results of squared resolvent matrices $\mathbf{S}^m(z) \triangleq (\mathbf{Q}^{m}(z))^2$ and $\tilde{\mathbf{S}}^m(z) \triangleq (\tilde{\mathbf{Q}}^{m}(z))^2$.
\begin{lemma}\label{lemma: Q2} \
The following limits hold almost surely:
\begin{align}
    \lim_{N_m \to \infty,\, N_m/K_m = \beta_m} [\mathbf{S}^m(z)]_{ii} - \lambda_{m,i}(z) &= 0, \label{as:diag S}\\
    \lim_{N_m \to \infty,\, N_m/K_m = \beta_m} [\tilde{\mathbf{S}}^m(z)]_{jj} - \mu_{m,j}(z) &= 0,\label{as:diag tilde S}
\end{align}
where the vectors $\boldsymbol{\lambda}_m(z) = [\lambda_{m,1}(z), \dots, \lambda_{m,K_m}(z)]^\dagger$ and $\boldsymbol{\mu}_m(z) = [\mu_{m,1}(z), \dots, \mu_{m,N_m}(z)]^\dagger$ are the solutions of the following $K_m + N_m$ equations:
\begin{align}
    \lambda_{m,i}(z) &= \phi_{m,i}^2(z) \Bigl[1 + N_m^{-1}\sum_j \theta_{i,j}^m (\psi_{m,j}(z)+z \mu_{m,j}(z)) \Bigr], \label{iteration lam}\\ 
    \mu_{m,j}(z) &= \psi_{m,j}^2(z)\Bigl[1 + N_m^{-1}\sum_i \theta_{i,j}^m (\phi_{m,i}(z) +z \lambda_{m,i}(z))  \Bigr],\label{iteration mu}
\end{align}
for $1\le i\le K_m$ and $1\le j\le N_m$.
\end{lemma}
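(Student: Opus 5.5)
The plan is to view the squared resolvent as a derivative. Since $\mathbf{Q}^m(z)=(N_m^{-1}\mathbf{G}^m\mathbf{G}^{m\dagger}-z\mathbf{I})^{-1}$, we have $\frac{d}{dz}\mathbf{Q}^m(z)=(\mathbf{Q}^m(z))^2=\mathbf{S}^m(z)$, and likewise $\frac{d}{dz}\tilde{\mathbf{Q}}^m(z)=\tilde{\mathbf{S}}^m(z)$. Hence $[\mathbf{S}^m(z)]_{ii}=\frac{d}{dz}[\mathbf{Q}^m(z)]_{ii}$. Lemma~\ref{lemma:Q in hachem} gives $[\mathbf{Q}^m(z)]_{ii}-\phi_{m,i}(z)\to0$ almost surely. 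The argument therefore has two parts: pass this convergence to derivatives, then show that $\lambda_{m,i}=\phi_{m,i}'$ and $\mu_{m,j}=\psi_{m,j}'$ satisfy \eqref{iteration lam}--\eqref{iteration mu}.

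\textbf{Step 1 (derivative of the limit equations).} Write $\phi_{m,i}=-1/\bigl(z(1+N_m^{-1}\sum_j\theta^m_{i,j}\psi_{m,j})\bigr)$, which gives $1/\phi_{m,i}=-z-N_m^{-1}\sum_j\theta_{i,j}^m z\psi_{m,j}$. Differentiating in $z$ yields
\begin{equation*}
-\frac{\phi_{m,i}'}{\phi_{m,i}^2}=-1-N_m^{-1}\sum_j\theta^m_{i,j}\bigl(\psi_{m,j}+z\psi_{m,j}'\bigr).
\end{equation*}
This is exactly \eqref{iteration lam} with $\lambda_{m,i}=\phi_{m,i}'$ and $\mu_{m,j}=\psi_{m,j}'$. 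The symmetric computation gives \eqref{iteration mu}. Differentiability of $\phi_{m,i},\psi_{m,j}$ follows because, as in \cite{hachem2007deterministic}, they are Stieltjes transforms of probability measures on $\mathbb{R}^+$, hence analytic on $\mathbb{C}\setminus\mathbb{R}^+$. It also has to be checked that the linear system \eqref{iteration lam}--\eqref{iteration mu} in $(\boldsymbol\lambda_m,\boldsymbol\mu_m)$ has a unique solution, so that "the solutions" in the statement coincide with the derivatives. The system has the form $(\mathbf{I}-z^2\mathbf{A})x=b$, where $\mathbf{A}$ is built from $\operatorname{diag}(\phi^2)\Theta\operatorname{diag}(\psi^2)\Theta^\top/N_m^2$. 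Invertibility would follow from a spectral-radius bound $<1$ of the kind used in \cite{hachem2007deterministic} to prove uniqueness of $(\boldsymbol\phi,\boldsymbol\psi)$. I would first take $z$ real negative, where all quantities are positive and the bound follows from the fixed-point identity, and then extend by analytic continuation.

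\textbf{Step 2 (convergence of derivatives).} Fix a realization in the probability-one event and a countable dense set $\{z_\ell\}\subset\mathbb{C}\setminus\mathbb{R}^+$ on which Lemma~\ref{lemma:Q in hachem} holds simultaneously for all $i$. The functions $f_{N}(z)=[\mathbf{Q}^m(z)]_{ii}-\phi_{m,i}(z)$ are analytic on $\mathbb{C}\setminus\mathbb{R}^+$. They are uniformly bounded on compacts $K$ at positive distance $\delta$ from $\mathbb{R}^+$, since $|[\mathbf{Q}^m]_{ii}|\le\|\mathbf{Q}^m\|\le1/\delta$ and the same bound holds for $\phi_{m,i}$ as a Stieltjes transform. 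By Montel's theorem the family is normal. Every subsequential limit is analytic and vanishes on a dense set, hence vanishes identically. So $f_N\to0$ locally uniformly, and by the Cauchy integral formula $f_N'\to0$ locally uniformly. This gives \eqref{as:diag S}, and the same argument applied to $\tilde{\mathbf{Q}}^m$ gives \eqref{as:diag tilde S}.

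\textbf{Main obstacle.} The delicate point is the index $i$. The number of diagonal entries grows with $N_m$, so convergence must hold simultaneously for a fixed $i$ (or a sequence of indices) along the growing system. Lemma~\ref{lemma:Q in hachem} is stated entrywise, so I would apply Step 2 to an arbitrary fixed index sequence $i=i(N_m)$. The Montel argument only needs the uniform bound $1/\delta$, which is index-independent, and the pointwise almost-sure convergence on the countable set. The latter is supplied by the lemma, provided its proof gives a bound uniform in $i$, for instance $\max_i|[\mathbf{Q}^m]_{ii}-\phi_{m,i}|\to0$ via concentration plus a union bound. I would state and use this uniform version. The remaining work, the uniqueness of the linear system in Step 1, is routine given the contraction estimates behind Lemma~\ref{lemma:Q in hachem}.
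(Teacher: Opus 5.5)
Your proposal is correct and is essentially the paper's proof. The paper likewise uses $\frac{d}{dz}\mathbf{Q}^m(z)=\mathbf{S}^m(z)$ and differentiates \eqref{iteration phi}--\eqref{iteration psi} to obtain \eqref{iteration lam}--\eqref{iteration mu} with $\boldsymbol\lambda_m=\boldsymbol\phi_m'$, $\boldsymbol\mu_m=\boldsymbol\psi_m'$; it simply asserts that Lemma~\ref{lemma:Q in hachem} passes to derivatives, and your Montel/Cauchy-formula step and your uniqueness check supply the justification it omits. One minor correction to that uniqueness check: analytic continuation from the negative axis only gives $\det(\mathbf{I}-z^2\mathbf{A})\neq 0$ off a discrete set, so for non-real $z$ you should use the analogous imaginary-part (Perron--Frobenius) bound instead, although the paper only ever uses $z=-\alpha_m<0$, where your real-axis argument already suffices.
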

\begin{IEEEproof}
See Appendix~\ref{app: proof Q2}.
\end{IEEEproof}

Utilizing these lemmas, the main result on rate approximation is established in the following theorem.
\begin{theorem}\label{them: DE of SINR}
There exists a deterministic equivalent $\bar{R}_k$ such that
\vspace{-0.5em}
\begin{equation*}
    \lim_{\substack{N_m \to \infty,\, N_m/K_m = \beta_m \\ m = 1, \ldots, M}} R_k - \bar{R}_k = 0.
\end{equation*}
\end{theorem}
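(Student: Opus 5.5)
The plan is to construct $\bar{R}_k = \log_2(1+\bar{\gamma}_k)$, where $\bar{\gamma}_k$ is obtained by replacing each random quantity in \eqref{eq: SINR} with a deterministic equivalent. We then show $\gamma_k - \bar{\gamma}_k \to 0$ almost surely. Because $\log_2(1+\cdot)$ is Lipschitz on $\mathbb{R}^+$ and the SINR has uniformly integrable moments (standard in this regime, since $\bar\gamma_k$ is bounded and $\gamma_k$ is dominated by $D_k\xi_m^2/N_0$), dominated convergence then gives $R_k-\bar{R}_k\to 0$.

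The deterministic equivalents are built by writing $\mathbf{F}^m = N_m^{-1}\mathbf{G}^{m\dagger}\mathbf{Q}^m(-\alpha_m)$ and evaluating $z=-\alpha_m$ throughout.

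\emph{Normalization.} Since $tr(\mathbf{F}^m\mathbf{F}^{m\dagger}) = N_m^{-1}tr\bigl(\mathbf{Q}^m - \alpha_m\mathbf{S}^m\bigr)$, Lemma~\ref{lemma:Q in hachem} and Lemma~\ref{lemma: Q2} give the equivalent $N_m^{-1}\sum_i(\phi_{m,i}-\alpha_m\lambda_{m,i})$. Hence $\xi_m^2$ has an explicit equivalent, and so do the ratios $\xi_n^2/\xi_m^2$.

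\emph{Desired signal and intra-subnetwork interference.} Apply the rank-one removal $\mathbf{Q}^m$ vs.\ $\mathbf{Q}^m_{[k]}$ (row $k$ deleted) together with the matrix inversion lemma. This gives $\mathbf{g}^m_k\mathbf{f}^m_k = N_m^{-1}\mathbf{g}^m_k\tilde{\mathbf{Q}}^m_{[k]}\mathbf{g}^{m\dagger}_k/(1+N_m^{-1}\mathbf{g}^m_k\tilde{\mathbf{Q}}^m_{[k]}\mathbf{g}^{m\dagger}_k)$. The trace lemma then reduces the quadratic form to $N_m^{-1}\sum_j\theta^m_{k,j}[\tilde{\mathbf{Q}}^m]_{jj}$, and Lemma~\ref{lemma:Q in hachem} turns it into $N_m^{-1}\sum_j\theta^m_{k,j}\psi_{m,j}$. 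The total $D_k+I_k^{\text{intra}} = \mathbf{g}^m_k\mathbf{F}^m\mathbf{F}^{m\dagger}\mathbf{g}^{m\dagger}_k = N_m^{-1}[\mathbf{G}^m\mathbf{G}^{m\dagger}(\mathbf{Q}^m)^2]_{kk}$ simplifies to $[\mathbf{Q}^m-\alpha_m\mathbf{S}^m]_{kk}$. By the element-wise statements of the two lemmas this converges to $\phi_{m,k}-\alpha_m\lambda_{m,k}$, and $I_k^{\text{intra}}$ follows by subtracting $D_k$. This is exactly why diagonal element-wise convergence, rather than normalized traces, is needed.

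\emph{Inter-subnetwork interference.} For $n\neq m$, the vector $\mathbf{g}^{m,n}_k$ is independent of $\mathbf{G}^n$. Therefore
$\mathbf{g}^{m,n}_k\mathbf{F}^n\mathbf{F}^{n\dagger}\mathbf{g}^{m,n\dagger}_k - \sum_{j}(l^{m,n}_{k,j})^2[\mathbf{F}^n\mathbf{F}^{n\dagger}]_{jj}\to 0$ by the trace lemma with a diagonal weighting. Next, $\mathbf{F}^n\mathbf{F}^{n\dagger} = N_n^{-1}\tilde{\mathbf{Q}}^n N_n^{-1}\mathbf{G}^{n\dagger}\mathbf{G}^n\tilde{\mathbf{Q}}^n = N_n^{-1}(\tilde{\mathbf{Q}}^n+\alpha_n\tilde{\mathbf{S}}^n)$, using the push-through identity $\mathbf{G}^{\dagger}\mathbf{Q}=\tilde{\mathbf{Q}}\mathbf{G}^\dagger$. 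So the diagonal entries converge to $N_n^{-1}(\psi_{n,j}-\alpha_n\mu_{n,j})$ up to the sign convention at $z=-\alpha_n$ (one checks signs so that the expression is $\tilde{\mathbf Q}-\alpha\tilde{\mathbf S}$). The inter-subnetwork term is thus a weighted average of diagonal entries of $\tilde{\mathbf{Q}}^n$ and $\tilde{\mathbf{S}}^n$ with non-uniform weights $(l^{m,n}_{k,j})^2$, which is where Lemmas~\ref{lemma:Q in hachem}--\ref{lemma: Q2} replace the usual uniform-trace results.

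The main obstacle is passing from entrywise almost sure convergence to convergence of the weighted sums over $N_n$ entries. Each individual entry converges, but we need $\max_j|[\tilde{\mathbf{Q}}^n]_{jj}-\psi_{n,j}|\to0$, or at least weighted-average convergence. The first thing to try is to use the quantitative rates underlying the appendix proofs: fourth-moment bounds of order $O(N^{-2})$ per entry, combined with a union bound and Borel--Cantelli, together with the uniform boundedness of $\theta$, $l$, $|\phi|,|\psi|\le 1/\alpha$, and $|\lambda|,|\mu|\le1/\alpha^2$. Finally, all denominators, namely $N_0/\xi_m^2$ and $1+N_m^{-1}\sum_j\theta\psi$, are bounded away from zero for $\alpha_m>0$. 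The continuous mapping then yields $\gamma_k-\bar\gamma_k\to0$ almost surely and hence the theorem.
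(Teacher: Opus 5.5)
\textbf{There is a concrete error in the intra-subnetwork term.} Your overall structure matches the paper's: diagonal entries of $\mathbf{Q}^m(-\alpha_m)$, $\tilde{\mathbf{Q}}^m(-\alpha_m)$ and their squares; the trace lemma on the independent cross channel $\mathbf{g}^{m,n}_k$; Lemmas~1--2; then continuous mapping plus dominated convergence. The problem is your identity for $D_k+I_k^{\text{intra}}$.

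Since $\mathbf{F}^m=N_m^{-1}\mathbf{G}^{m\dagger}\mathbf{Q}^m$, one has $\mathbf{G}^m\mathbf{F}^m\mathbf{F}^{m\dagger}\mathbf{G}^{m\dagger}=N_m^{-2}\mathbf{G}^m\mathbf{G}^{m\dagger}(\mathbf{Q}^m)^2\mathbf{G}^m\mathbf{G}^{m\dagger}$. You dropped the trailing factor $N_m^{-1}\mathbf{G}^m\mathbf{G}^{m\dagger}$. What you wrote, $N_m^{-1}[\mathbf{G}^m\mathbf{G}^{m\dagger}(\mathbf{Q}^m)^2]_{kk}=[\mathbf{Q}^m-\alpha_m\mathbf{S}^m]_{kk}$, equals $N_m\|\mathbf{f}^m_k\|^2$. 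That is the transmit power of user $k$'s precoder, not the received signal-plus-interference power.

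Already for $K_m=N_m=1$ and $x=|g|^2$, your expression gives $x/(x+\alpha_m)^2$ instead of $x^2/(x+\alpha_m)^2$. The discrepancy is $\mathbf{X}(\mathbf{X}-\mathbf{I})(\mathbf{Q}^m)^2$ with $\mathbf{X}=N_m^{-1}\mathbf{G}^m\mathbf{G}^{m\dagger}$, and it does not vanish in the limit in general. With $\theta^m_{k,j}\equiv 1$, its diagonal behaves like $K_m/(N_m\alpha_m^2)$ for large $\alpha_m$. So your $\bar I_k^{\text{intra}}=\phi_{m,k}-\alpha_m\lambda_{m,k}-(1-\alpha_m\phi_{m,k})^2$ is not the a.s.\ limit, and $\gamma_k-\bar\gamma_k\to 0$ fails as written.

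The repair is the paper's. $\mathbf{G}^m\mathbf{F}^m=\mathbf{I}_{K_m}-\alpha_m\mathbf{Q}^m$ is Hermitian, so $D_k+I_k^{\text{intra}}=1-2\alpha_m[\mathbf{Q}^m]_{kk}+\alpha_m^2[\mathbf{S}^m]_{kk}$, which gives $\bar I_k^{\text{intra}}=\alpha_m^2(\lambda_{m,k}-\phi_{m,k}^2)$. The same identity yields $\bar D_k=(1-\alpha_m\phi_{m,k})^2$ directly from Lemma~1 at the single index $k$. Your rank-one-removal route to $D_k$ is correct but unnecessary, and it brings in a weighted average over $N_m$ entries that you don't need there. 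Also, $\mathbf{F}^n\mathbf{F}^{n\dagger}=N_n^{-1}(\tilde{\mathbf{Q}}^n-\alpha_n\tilde{\mathbf{S}}^n)$ at $z=-\alpha_n$; the sign is minus from the start, not the $+\alpha_n$ you first wrote.

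\textbf{The averaging step.} You are right that a.s.\ convergence of each of the $N_n$ diagonal entries does not by itself give $N_n^{-1}\sum_j\theta^{m,n}_{k,j}X_j\to 0$, where $X_j=[\tilde{\mathbf{Q}}^n]_{jj}-\psi_{n,j}$. The paper passes over this too when it writes ``following the same approach''. However, your proposed fix does not work as stated. A union bound over $N_n$ entries with fourth moments $O(N_n^{-2})$ gives $P(\max_j|X_j|>\epsilon)=O(N_n^{-1})$, which is not summable, so Borel--Cantelli fails for the maximum.

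For the weighted average you do not need the maximum. By convexity, $\mathbb{E}|N_n^{-1}\sum_j\theta_jX_j|^4\le(\max_j\theta_j)^4N_n^{-1}\sum_j\mathbb{E}|X_j|^4=O(N_n^{-2})$, which is summable. Alternatively, cite directly the weighted-trace deterministic equivalent $N_n^{-1}\mathrm{tr}\,\mathbf{D}(\tilde{\mathbf{Q}}^n-\mathrm{diag}(\boldsymbol{\psi}_n))\to 0$ for bounded diagonal $\mathbf{D}$ from Hachem et al., and obtain the $\tilde{\mathbf{S}}^n$ version by differentiating in $z$. Either way, Lemmas~1--2 as stated carry no rates, so the moment bounds must be proved rather than read off.
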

\begin{IEEEproof}
The proof consists of three key stages.
First, we reformulate $D_k$, $I_k^{\text{intra}}$, and $I_k^{\text{inter}}$ into functions of the diagonal entries of resolvent matrices.
Next, leveraging large-dimensional RMT tools, we establish almost sure convergence limits for four terms $D_k$, $I_k^{\text{intra}}$, $I_k^{\text{inter}}$, and $\xi_m$.
Finally, these results are combined to yield the deterministic equivalent of the achievable rate.
\subsubsection{Reformulation} 
For the desired signal and the intra-subnetwork interference, we rewrite them as:\vspace{-0.5em}
\begin{align}
D_k&=|\mathbf{g}^{m}_{k}\mathbf{f}^{m}_{k}|^2=\left([ \mathbf{G}^{m} \mathbf{F}^m ]_{k,k} \right)^2,\label{step1:desired}\\
I_{k}^{\text{intra}} &=\sum_{j=1} ^{K_m}|\mathbf{g}^{m}_{k}\mathbf{f}^{m}_{j}|^2-|\mathbf{g}^{m}_{k}\mathbf{f}^{m}_{k}|^2\nonumber\\
&= \left[ \mathbf{G}^{m} \mathbf{F}^m \mathbf{F}^{m\dagger} \mathbf{G}^{m\dagger} \right]_{k,k} - \left(\left[ \mathbf{G}^{m} \mathbf{F}^m \right]_{k,k} \right)^2.\nonumber
\end{align}
For the inter-subnetwork interference $I_{k}^{\text{inter}}$, we first rewrite it in matrix form:
$I_{k}^{\text{inter}}
    =\sum_{n=1, n \neq m}^{M}\frac{\xi_n^2}{\xi_m^2}\mathbf{g}^{m,n}_k \mathbf{F}^n \mathbf{F}^{n\dagger} \mathbf{g}^{m,n\dagger}_k.
$
Note that the cross-subnetwork channel vector $\mathbf{g}_{k}^{m,n}$ is independent of the precoding matrix $\mathbf{F}^n$.
By leveraging the mutual independence of the elements in $\mathbf{g}_{k}^{m,n}$ and the trace lemma~\cite{Couillet_Debbah_2011}, we establish the following almost sure convergence:
\vspace{-0.5em}
\begin{equation*}\label{step1:inter}
 \lim_{\substack{N_m \to \infty,\, N_m/K_m = \beta_m \\ m = 1, \ldots, M}}
I_{k}^{\text{inter}} 
- \sum_{\substack{n=1 \\ n \neq m}}^{M} \frac{\xi_n^2}{\xi_m^2} \sum_{j=1}^{N_n} \theta_{k,j}^{m,n} \big[\mathbf{F}^n \mathbf{F}^{n\dagger}\big]_{j,j} 
= 0.
\end{equation*}
\vspace{-0.5em}

It is evident that the expressions for $D_k$, $I_k^{\text{intra}}$, $I_k^{\text{inter}}$ and $\xi_m$ are primarily determined by the diagonal entries of
$\mathbf{G}^{m} \mathbf{F}^m $, $\mathbf{G}^{m} \mathbf{F}^m \mathbf{F}^{m\dagger} \mathbf{G}^{m\dagger}$ and $\mathbf{F}^m \mathbf{F}^{m\dagger}, \ \forall m=1,\cdots,M$.
Therefore, the analysis shifts to characterizing the asymptotic behavior of the diagonal entries of these matrices.
By utilizing the definition of $\mathbf{F}^m$ in \eqref{def: RZF matrix}, we have\vspace{-0.5em}
\begin{equation}\label{eq:GF}
    \mathbf{G}^{m} \mathbf{F}^m
    =\mathbf{I}_{K_m}-\alpha_m (N_m^{-1}\mathbf{G}^{m} \mathbf{G}^{m\dagger} +\alpha_m \mathbf{I}_{K_m})^{-1}.
\end{equation}
Since $\mathbf{G}^m \mathbf{F}^m$ is Hermitian, it follows that
\begin{equation*}\label{eq:GFFG}
    \begin{aligned}
    \mathbf{G}^{m} \mathbf{F}^m \mathbf{F}^{m\dagger} \mathbf{G}^{m\dagger} 
    &=\mathbf{I}_{K_m}-2\alpha_m(N_m^{-1}\mathbf{G}^{m} \mathbf{G}^{m\dagger} +\alpha_m \mathbf{I}_{K_m})^{-1}\\
    &\quad \ +\alpha_m^2(N_m^{-1}\mathbf{G}^{m} \mathbf{G}^{m\dagger} +\alpha_m \mathbf{I}_{K_m})^{-2}.
\end{aligned}
\end{equation*}
Furthermore, we get
\begin{equation*}\label{eq:FF}
    \begin{aligned}
    \mathbf{F}^m \mathbf{F}^{m\dagger} &= \mathbf{G}^{m\dagger} ( \mathbf{G}^{m} \mathbf{G}^{m\dagger} + N_m\alpha_m \mathbf{I}_{K_m})^{-2}\mathbf{G}^{m} \\
    &=N_m^{-1}\left[(N_m^{-1} \mathbf{G}^{m\dagger} \mathbf{G}^{m} + \alpha_m \mathbf{I}_{N_m})^{-1}\right.\\
    &\left.\quad-\alpha_m(N_m^{-1} \mathbf{G}^{m\dagger} \mathbf{G}^{m} + \alpha_m \mathbf{I}_{N_m})^{-2}\right].
\end{aligned}
\end{equation*}

\subsubsection{Asymptotic Limits of $D_k$, $I_k^{\text{intra}}$, $I_k^{\text{inter}}$, and $\xi_m$} 
For the desired signal term $D_k$, substituting \eqref{eq:GF} into the \eqref{step1:desired} and applying Lemma~\ref{lemma:Q in hachem}, we obtain:
\begin{equation}\label{eq:as derivation}
\begin{aligned}
&\lim_{N_m \to \infty, N_m/K_m = \beta_m} D_k
-\left(1-\alpha_m\phi_{m,k}(-\alpha_m)\right)^2\\
&=\lim_{N_m \to \infty, N_m/K_m = \beta_m}\left[-2\alpha_m([\mathbf{Q}^m]_{kk}-\phi_{m,k}(-\alpha_m))\right.\\
&\left.\qquad\qquad\qquad\qquad+\alpha_m^2(([\mathbf{Q}^m]_{kk})^2-\phi_{m,k}^2(-\alpha_m))\right]\\
&=0.
\end{aligned}
\end{equation}
Thus, $D_k$ converges
almost surely to its deterministic equivalent $\bar{D}_k$
under the large-system regime, given by
\begin{equation}
\bar{D}_k=\left(1-\alpha_m\phi_{m,k}(-\alpha_m)\right)^2.
\label{eq:D with diagonal}
\end{equation}

Following the same approach as in \eqref{eq:as derivation}
and using Lemma~\ref{lemma: Q2}, we establish that
$I_k^{\text{intra}}$, $I_k^{\text{inter}}$, and $\xi_m$
converge almost surely to their deterministic equivalents
$\bar I_k^{\text{intra}}$, $\bar I_k^{\text{inter}}$, and
$\bar\xi_m$, respectively, in the large-system regime, given by
\begin{align}
    & \bar I_{k}^{\text{intra}} = \alpha^2_m\left(\lambda_{m,k}(-\alpha_m)-\phi^2_{m,k}(-\alpha_m)\right),\label{eq:I1 with diagonal}\\
    & \bar I_{k}^{\text{inter}} =\sum_{n=1,n\neq m}^M\frac{\bar{\xi}_n^2}{\bar{\xi}_m^2}N_n^{-1}\sum_{j=1}^{N_n}\theta_{k,j}^{m,n}\left(\psi_{n,j}(-\alpha_n)\right.\nonumber\\
    &\left.\qquad\qquad-\alpha_n\mu_{n,j}(-\alpha_n)\right)\label{eq:I2 with diagonal},\\
    & \bar{\xi}_m = \sqrt{\frac{P_m}{N_m^{-1}\sum_{j=1}^{N_m} \left(\psi_{m,j}(-\alpha_m)-\alpha_m\mu_{m,j}(-\alpha_m)\right)}}.\label{eq:xi^2 with diagonal}
\end{align}

\subsubsection{Deterministic Equivalent of ergodic achievable rate $R_k$}
Combining almost sure convergence results of $D_k$, $I_k^{\text{intra}}$, $I_k^{\text{inter}}$, and $\xi_m$ via the continuous mapping theorem~\cite{billingsley1995probability}, we establish the deterministic equivalent 
\begin{equation}\label{eq:bar rate}
\bar{R}_k=\log_2\left(1+\bar{\gamma}_k\right)=\log_2\left(1+\frac{\bar D_k}{\frac{N_0}{\bar \xi^2_m}+ \bar I_{k}^{\text{intra}}+ \bar I_{k}^{\text{inter}}}\right).
\end{equation}
Applying the dominated convergence theorem in \cite{billingsley1995probability}, we further show that the ergodic achievable rate converges to this deterministic equivalent:
$
    \lim_{\substack{N_m \to \infty,\, N_m/K_m = \beta_m \\ m = 1, \ldots, M}} R_k - \bar{R}_k =0.
$
This completes the proof.
\end{IEEEproof}
This theorem establishes the deterministic equivalent result for analyzing the performance of clustered cell-free network systems with RZF precoding in the large-system limit.
The derived deterministic equivalent not only enables computationally efficient system evaluation but also provides profound insights into the impact of key system parameters. 
Specifically, our method can facilitate the optimization of $\alpha_m$ and power allocation to maximize the ergodic sum achievable rate in clustered cell-free networks.

\section{Algorithm}\label{sec: alg}
\begin{algorithm}[t]
\renewcommand{\algorithmicrequire}{\textbf{Input:}}
\renewcommand{\algorithmicensure}{\textbf{Output:}}
\caption{Element-wise Subnetwork-Aware Method with Stabilized Variable Transformation}
\label{alg:Rate estimate under RZF} 
\begin{algorithmic}[1] 
    \REQUIRE Large-scale fading $\mathbf{l}^{m,n}_{k}$, regularization scalar $\alpha_m$
    \ENSURE Estimation of $R_k$
    \STATE 
    Iteration \eqref{iter phi2}-\eqref{iter psi2} for $T_{\max}$ iterations to compute $\boldsymbol{\phi}_m(-\alpha_m)$, $\boldsymbol{\hat \psi}_m(-\alpha_m)$ for each subnetwork, initializing $\boldsymbol{\phi}_m(-\alpha_m)$ with $\mathbf{1}_{K_m}$ \label{alg line phi}
    \STATE 
    Iteration \eqref{iter lambda2}-\eqref{iter mu2} for $T_{\max}$ iterations to compute $\boldsymbol{\lambda}_m(-\alpha_m)$, $\boldsymbol{\hat \mu}_m(-\alpha_m)$ for each subnetwork, initializing  $\boldsymbol{\lambda}_m(-\alpha_m)$ with $\mathbf{1}_{K_m}$\label{alg line mu}
    \STATE Calculate $\bar{\xi}_m$ with \eqref{eq:xi^2 with diagonal}
    \STATE Calculate $\bar{D}_k, \bar I_k^{\text{intra}}$, and $\bar I_k^{\text{inter}}$ by \eqref{eq:D with diagonal} to \eqref{eq:I2 with diagonal}
    \STATE Compute the approximation of $R_k$ by \eqref{eq:bar rate}
\end{algorithmic}
\end{algorithm}

The results of Theorem~\ref{them: DE of SINR} can be specialized to the ZF precoding~\cite{caire2003zf} case by taking the limit as $\alpha_m \to 0$, under the additional assumptions that \(\liminf_m \beta_m > 1\).
However, this limiting case requires careful treatment, since matrices $N_m^{-1}\mathbf{G}_m^\dagger \mathbf{G}_m+\alpha_m\mathbf{I}$ could become
rank-deficient, making the associated quantities
$\psi_{m,j}(-\alpha_m)$ and $\mu_{m,j}(-\alpha_m)$ numerically unstable as
$\alpha_m \to 0$.
To overcome this critical issue and establish a unified analytical method for both RZF and ZF precoding, we introduce the Stabilized Variable Transformation (SVT), which reformulates these intermediate variables to ensure numerical stability in the small-$\alpha$ regime.

The SVT defines the intermediate variables as
\begin{equation*}
\hat{\psi}_{m,j}(z) \gets z\psi_{m,j}(z), \ \
\hat{\mu}_{m,j}(z) \gets \psi_{m,j}(z)+z\mu_{m,j}(z).
\end{equation*}
These transformed variables remain $O(1)$ as $z \to 0$.
This boundedness can be proven via eigenvalue analysis of the matrices \( z\tilde{\mathbf{Q}}^m(z) \) and
\( \tilde{\mathbf{Q}}^m(z) + z\tilde{\mathbf{S}}^m(z) \), while the detailed derivation is omitted here for brevity.
By substituting SVT into the original fixed-point equations \eqref{iteration phi}–\eqref{iteration psi} and \eqref{iteration lam}–\eqref{iteration mu}, we obtain the numerically stable formulations
\begin{align}
\phi^{-1}_{m,i}(z) &= -z - N_m^{-1}\sum_j \theta_{i,j}^m \hat\psi_{m,j}(z), \label{iter phi2} \\
\hat\psi^{-1}_{m,j}(z) &= -( 1 + N_m^{-1}\sum_i \theta_{i,j}^m \phi_{m,i}(z) ),  \label{iter psi2}\\
\lambda_{m,i}(z) &= \phi_{m,i}^2(z) (1 + N_m^{-1}\sum_j \theta_{i,j}^m \hat{\mu}_{m,j}(z) ), \label{iter lambda2}\\
\hat\mu_{m,j}(z) &= \hat\psi_{m,j}^2(z)(N_m^{-1}\sum_i \theta_{i,j}^m \lambda_{m,i}(z) ),\label{iter mu2} \\
\forall 1\leq &i \leq K_m,\qquad  \forall 1 \leq j \leq N_m.\nonumber
\end{align}

\vspace{-0.5em}
The computational complexity of the proposed SERE method is $O\left(\sum_iK_i^2\right)$.
This represents a substantial reduction in complexity compared to Monte Carlo simulations, which reach a complexity of $O\left(\sum_iK_i^3\right)$ due to costly matrix inversions.
This reduction in complexity makes the SERE method well-suited for real-time optimization in future ultra-dense networks.
Moreover, $\boldsymbol{\phi}_m(-\alpha_m)$, $\boldsymbol{\hat\psi}_m(-\alpha_m)$, $\boldsymbol{\lambda}_m(-\alpha_m)$, and $ \boldsymbol{\hat\mu}_m(-\alpha_m)$ can be efficiently solved using fixed-point iterations~\cite{wagner2012large}. 
The complete algorithmic procedure is provided in Algorithm~\ref{alg:Rate estimate under RZF}.

\section{Performance Evaluation}
\vspace{-0.5em}
In this section, we demonstrate the accuracy and efficiency of the proposed SERE method, and validate the numerical stability of the SVT scheme for downlink rate estimation in clustered cell-free networks under both RZF and ZF precoding through simulations.
\begin{table}[t]
\centering
\caption{Experimental Parameter Settings}
\label{tab:experiment_parameters}
\begin{tabular}{cc}
\hline
Parameter & Value \\ \hline
Noise Power ($N_0$) & $1 \times 10^{-12}$ W \\
Power Constraint ($P_m$) & 1 W \\
Network Coverage Area ($D$) & 2000 m \\
Near field Threshold ($d_0$) & 10 m \\
Far field Threshold ($d_1$) & 50 m \\
Number of antennas per BS ($N^{m}_{s}$) & 1 \\
Regularization Parameter ($\alpha_m$)\cite{peel2005rzf} &  $ N_0(P_m\beta_m)^{-1}$\\ \hline
\end{tabular}\vspace{-0.5em}
\end{table}

Without loss of generality, we consider a scenario where both BSs and users are uniformly distributed within a square area.
The network is partitioned using the K-means clustering algorithm to group BSs and users into \( M \) non-overlapping subnetworks, as shown in the top row of Fig.~\ref{fig:rate under diff shapes and dists}.
We evaluate the average per-user rate of the central subnetwork, highlighted with a black circle.
To mitigate the impact of randomness in network construction, we computed the average downlink rate over 50 random clustered cell-free network realizations.
The basic experimental parameters are outlined in Table~\ref{tab:experiment_parameters}.
The Monte Carlo simulation averages the downlink rate over 100 random small-scale fading realizations.
The fixed-point iterations are run for $T_{\max}=50$, which is sufficient to ensure convergence in all tested scenarios. 
All experiments were conducted on a platform with two Intel Xeon E5-2699 v3 CPUs (36 cores) and 440G RAM.

\begin{figure}[t] 
    \centering
    \includegraphics[width=0.49\textwidth]{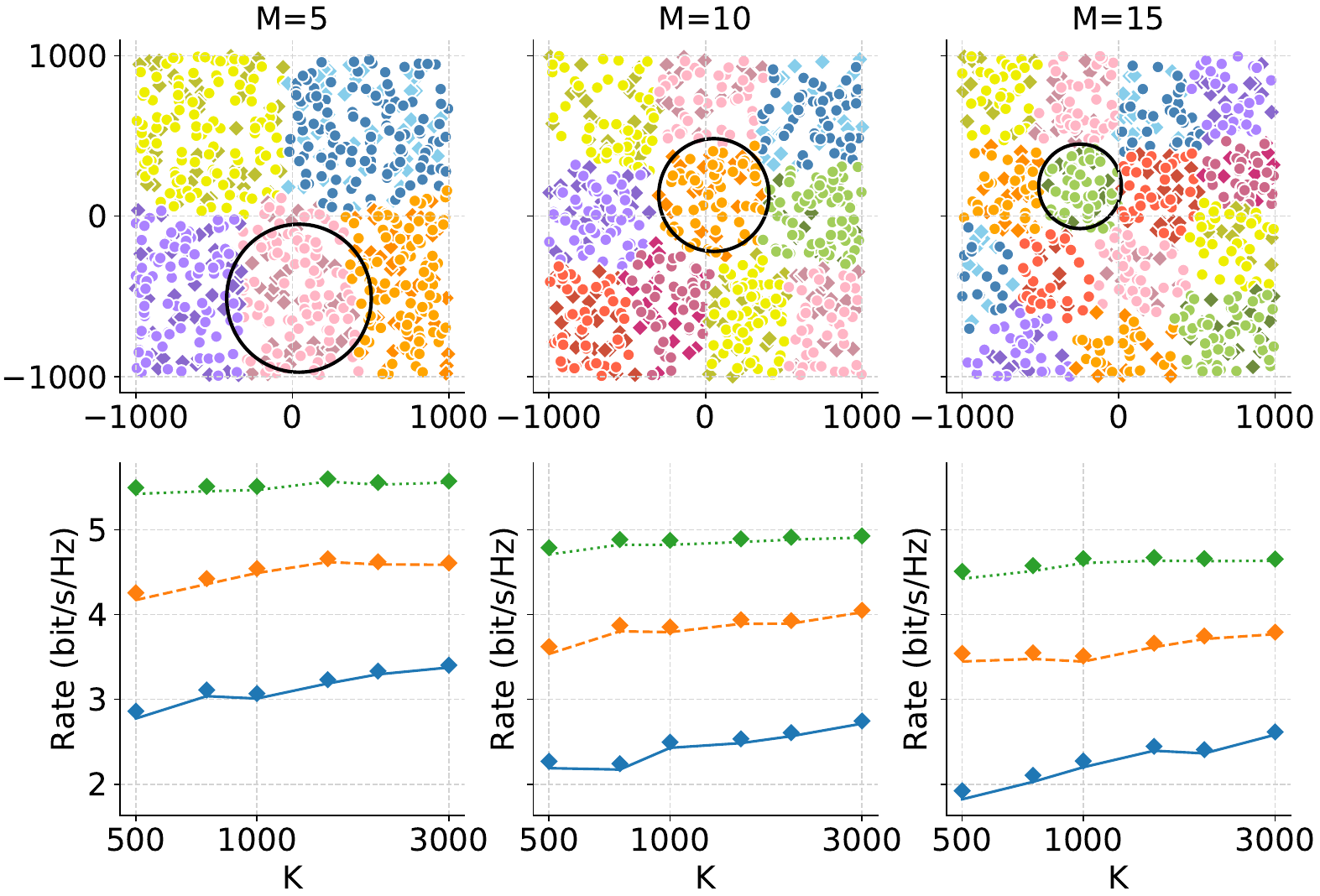}
    \vspace{-20pt}
    \caption{
    Top row: illustration of the clustered cell-free networks. Distinct colors denote different subnetworks, with diamonds representing users and circles corresponding to BSs.
    Bottom row: the downlink rate under RZF precoding versus the number of users $K$.
    Diamonds correspond to simulation results, and the lines represent our analytical approximations.
    There are three different ratios, which correspond to the green dotted line $\beta = 8$, the orange dashed line $\beta = 4$, and the blue solid line $\beta = 2$, respectively.
    }
    \label{fig:rate under diff shapes and dists}
\end{figure}
Fig.~\ref{fig:rate under diff shapes and dists} presents the visualization of clustered cell-free networks and the downlink rate performance under RZF precoding.
It shows that the average per-user rate obtained by the proposed SERE method aligns closely with the simulation results across different values of $M$ and $\beta$, indicating the method's high precision.
Moreover, the approximation accuracy improves as the number of BSs increases, which is consistent with the asymptotic convergence analysis presented in Theorem~\ref{them: DE of SINR}.
Remarkably, while our theoretical analysis is established under the large system limit, numerical simulations demonstrate that $\bar{R}_k$ maintains remarkable accuracy even at practical finite dimensions.

\begin{figure}[tbp] 
    \centering
    \includegraphics[width=0.49\textwidth]{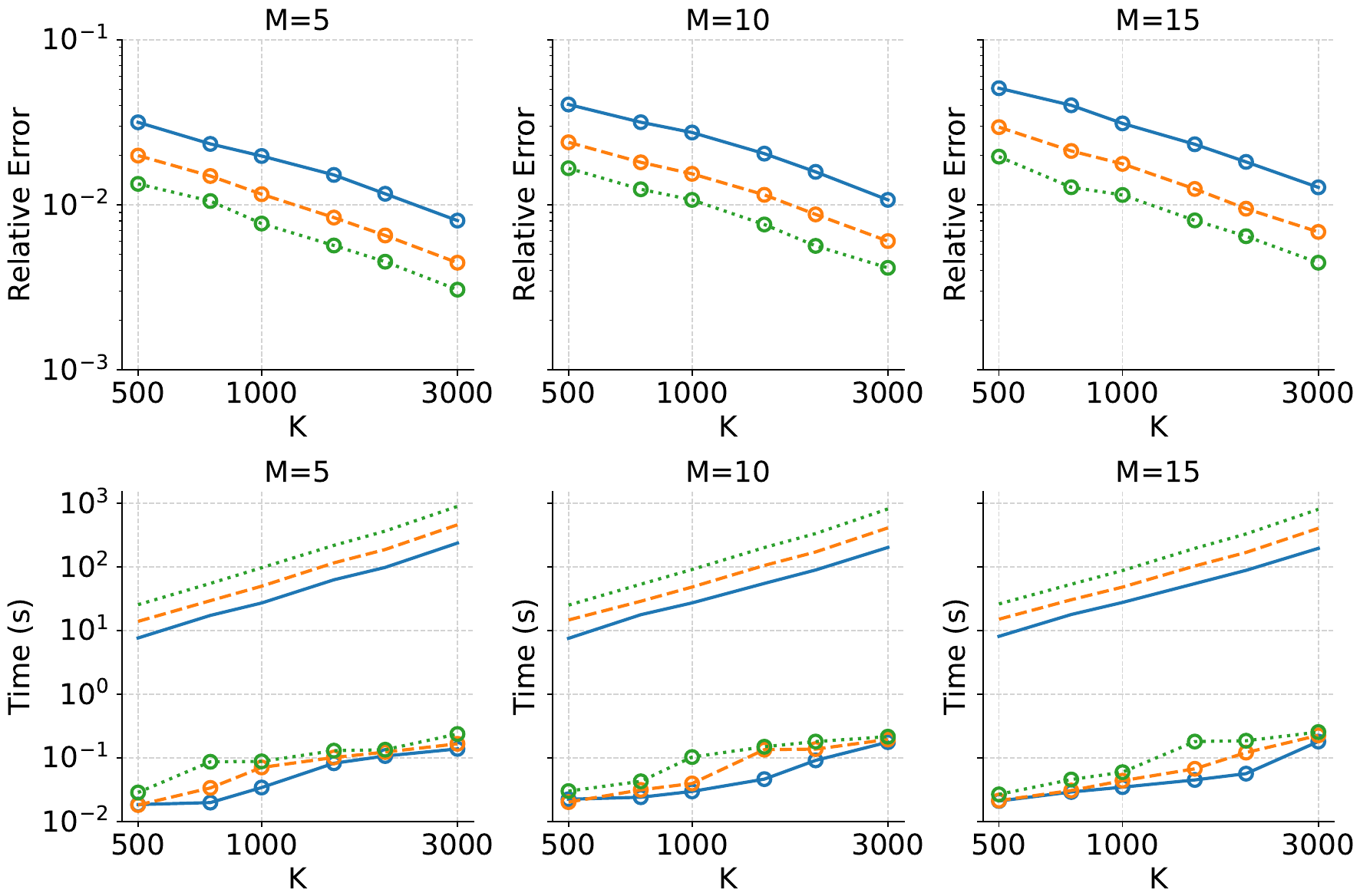}
    \vspace{-20pt}
    \caption{Performance of the SERE method versus the simulation baseline under RZF precoding. The top row shows the relative error of the average per-user rate estimation. The bottom row compares the computation time, where the lower and upper sets of curves correspond to the SERE method and the simulation, respectively. 
    The line styles denote various ratios: $\beta = 2$ (blue solid), $\beta = 4$ (orange dashed), and $\beta = 8$ (green dotted).}
    \label{fig:Relative error under rzf}
\end{figure}
Fig.~\ref{fig:Relative error under rzf} illustrates the relative error of our method as the $K$ increases. 
The results show a consistently decreasing trend, with relative errors below \(6\%\) across all parameter settings.
As $K$ increases, the relative error stabilizes around 
$1\%$, demonstrating strong reliability of the proposed method in large-scale scenarios.
In terms of computational efficiency, as illustrated in the second row of Fig.~\ref{fig:Relative error under rzf}, the proposed SERE method achieves an average speedup of 1279$\times$ compared to the baseline simulation.

\begin{figure}[tbp] 
    \centering
    \includegraphics[width=0.49\textwidth]{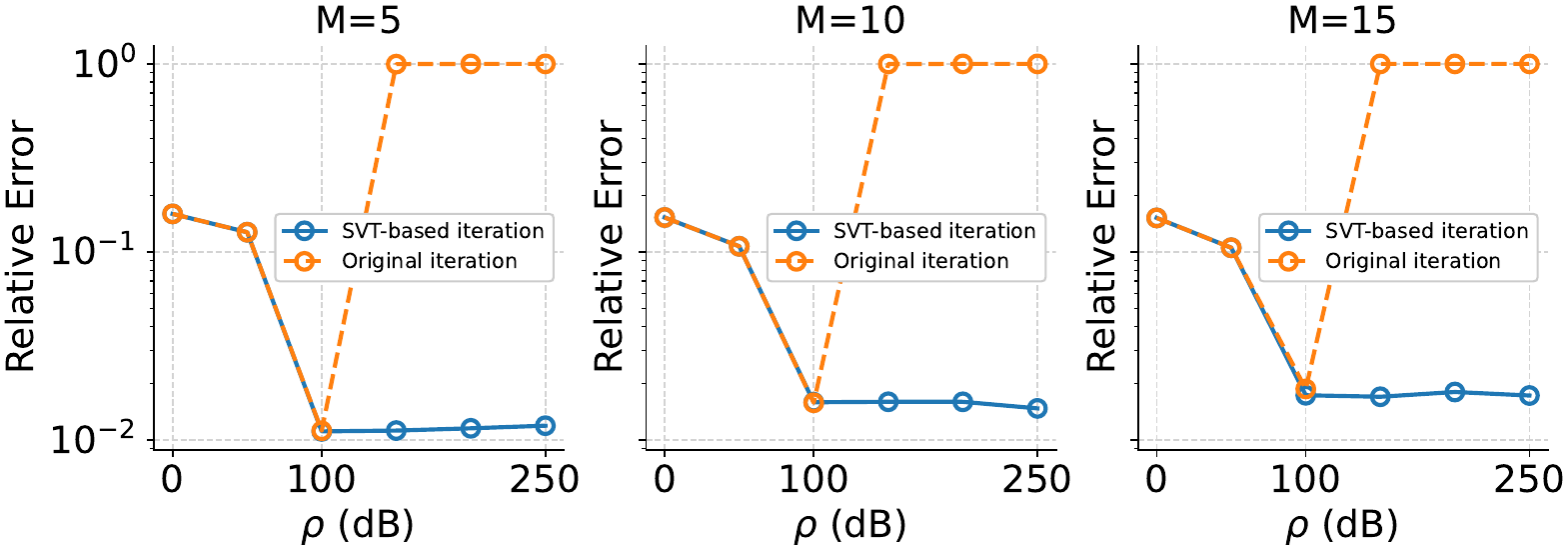}
    \vspace{-20pt}
    \caption{
    Comparison of the relative error of the average per-user rate with RZF precoding for the proposed SVT-based iteration and the original iteration, shown as a function of SNR.
    $K=1000$. $\beta=4$.}
    \label{fig:SVT_validation}
\end{figure}
\begin{figure}[t] 
    \centering
    \includegraphics[width=0.49\textwidth]{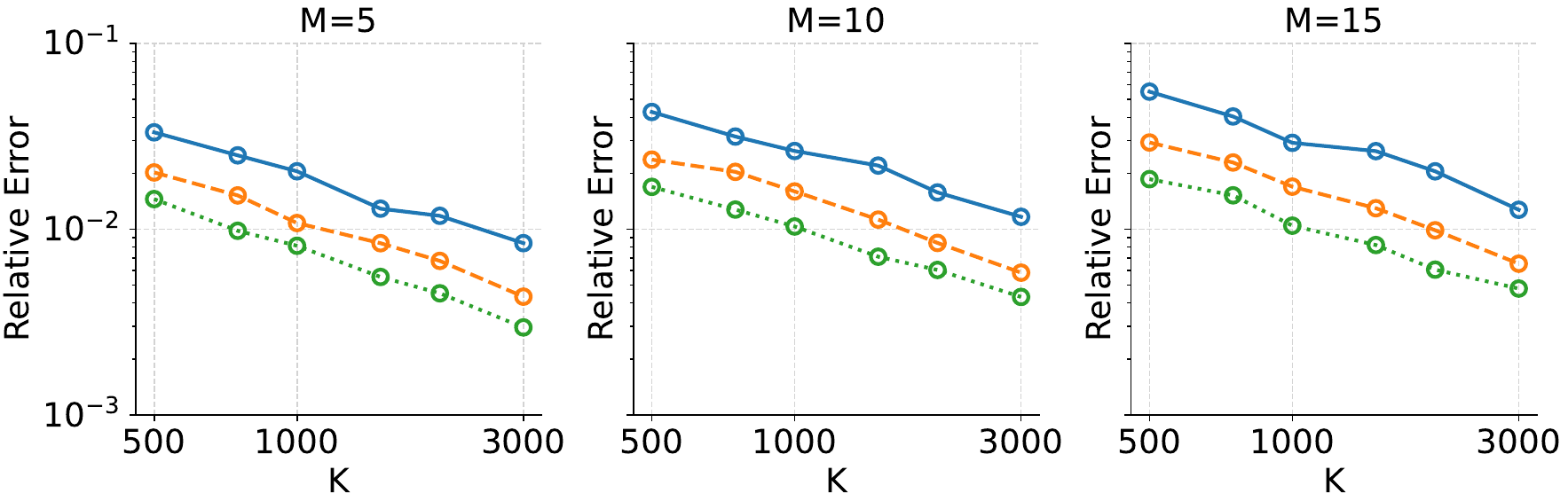}
    \vspace{-20pt}
    \caption{Relative error of the average per-user rate with our method under ZF precoding. The three ratios correspond to the green dotted line ($\beta = 8$), the orange dashed line ($\beta = 4$), and the blue solid line ($\beta = 2$), respectively.}
    \label{fig:SVT_ZF}
\end{figure}
To further verify the effectiveness of the SVT, we compare the rate estimation performance of the original fixed-point iterations defined by equations \eqref{iteration phi}–\eqref{iteration psi} and \eqref{iteration lam}–\eqref{iteration mu} against the proposed SVT-based formulation in \eqref{iter phi2}-\eqref{iter mu2}.
Fig.~\ref{fig:SVT_validation} plots the relative error versus the Signal-to-Noise Ratio~(SNR) $\rho$ for different numbers of subnetworks $M$, where the regularization parameter is set to $\alpha_m=(10^{\frac{\rho}{10}}\beta_m)^{-1}$.
In the high-SNR regime, which corresponds to extremely small $\alpha_m$, the original fixed-point iterations suffer from severe numerical instability, as discussed in Section~\ref{sec: alg}, causing the relative error to increase dramatically.
By contrast, the proposed SVT-based iterations maintain stable and accurate rate estimates across all $\rho$.
Fig.~\ref{fig:SVT_ZF} further demonstrates that the proposed SVT formulation remains well-defined and yields precise results in the ZF precoding case.

\vspace{-0.5em}
\section{Conclusion}
This paper proposed the Stabilized Element-wise Rate Estimation (SERE) method for efficient downlink rate estimation in clustered cell-free networks. 
The theoretical foundation was
established through the element-wise deterministic equivalents
of the resolvent matrices. 
Building upon this foundation, a
Stabilized Variable Transformation (SVT) was introduced to
ensure numerical stability and provide a unified
method that covers both RZF and ZF precoding.
Numerical results demonstrated that the proposed method achieves less than 6\% relative error while reducing computational complexity by over three orders of magnitude compared to Monte Carlo simulations.
\vspace{-1em}
\bibliographystyle{IEEEtran.bst}
\bibliography{reference.bib}
\appendices
\vspace{-1em}
\section{The Proof of Lemma~\ref{lemma:Q in hachem}}~\label{app: proof Q}
In this appendix, we establish element-wise convergence of $\mathbf{Q}^{m}(z)$ using RMT techniques. 
The proof for $\tilde{\mathbf{Q}}^{m}(z)$ proceeds analogously.
Without loss of generality, we consider the diagonal entry $[\mathbf{Q}^{m}(z)]_{11}$.
Partition the channel matrix as:
$
\mathbf{G}^{m} = \begin{bmatrix}
\mathbf{g}_1^{m} \\
\mathbf{G}_1^{m}
\end{bmatrix},$
where $\mathbf{g}_1^{m} \in \mathbb{C}^{1 \times N_m}$ is the first row. Through block matrix inversion, we obtain
\begin{align*}
[\mathbf{Q}^{m}(z)]_{11}^{-1} &= -z - z N_m^{-1} \mathbf{g}_1^{m} \left( N_m^{-1} \mathbf{G}_1^{m\dagger} \mathbf{G}_1^{m} - z\mathbf{I}_{N_m} \right)^{-1} \mathbf{g}_1^{m\dagger} \\
&= -z - z N_m^{-1} \mathbf{h}_1^{m} \mathbf{A}_1^m \mathbf{T}(z) \mathbf{A}_1^m \mathbf{h}_1^{m\dagger},
\end{align*}
where $\mathbf{T}(z) = ( N_m^{-1} \mathbf{G}_1^{m\dagger} \mathbf{G}_1^{m} - z\mathbf{I}_{N_m} )^{-1}$ and $\mathbf{A}_1^m$ is a diagonal matrix with elements $[l_{1,1}^{m},\dots,{l}_{1,N_m}^{m}]^\dagger$ along its main diagonal.
The normalized channel vector $N_m^{-1/2}\mathbf{h}_1^{m}$ satisfies $\mathbb{E}[|N_m^{-1/2}h_{1,j}^{m}|^8] = O(N_m^{-4})$, and $ \mathbf{A}_1^m\mathbf{T}(z) \mathbf{A}_1^m$ has uniformly bounded spectral norm. 
Using the trace lemma~\cite[Theorem 3.4]{Couillet_Debbah_2011}, we have
\vspace{-0.5em}
\begin{align*}
\lim_{N_m \to \infty, N_m/K_m = \beta_m} N_m^{-1} &\mathbf{h}_1^{m} \mathbf{A}_1^m \mathbf{T}(z) \mathbf{A}_1^m \mathbf{h}_1^{m\dagger} \\
&- N_m^{-1} \operatorname{tr} \left( (\mathbf{A}_1^{m})^2 \mathbf{T}(z) \right) = 0.
\end{align*}
\vspace{-0.5em}
By~\cite[Theorem 3.9]{Couillet_Debbah_2011} and \cite[Lemma 2.4]{hachem2008clt}, we have
\begin{align*}
\lim_{N_m \to \infty, N_m/K_m = \beta_m} [\mathbf{Q}^{m}(z)]_{11}^{-1} - \phi^{-1}_{m,1}(z) = 0.
\end{align*}

\section{The proof of Lemma~\ref{lemma: Q2}}\label{app: proof Q2}
By differentiating equations \eqref{iteration phi} with respect to $z$, we get
\begin{align*}
    &z(1 + N_m^{-1}\sum_l \theta_{i,l}^m \psi_{m,l}(z))\frac{d \phi_{m,i}(z)}{d z}+ \phi_{m,i}(z)(1 \\
    &+N_m^{-1}\sum_l \theta_{i,l}^m \psi_{m,l}(z) + N_m^{-1}z \sum_l \theta_{i,l}^m \frac{d \psi_{m,l}(z)}{dz})=0.
\end{align*}
Substituting the coefficients $z(1 + N_m^{-1}\sum_j \theta_{i,j}^m \psi_{m,j}(z))$ with $-\phi_{m,i}^{-1}(z)$ using \eqref{iteration phi} yields
\begin{equation}\label{eq: derive phi}
    \frac{d \phi_{m,i}(z)}{d z} = \phi_{m,i}^2 [1 + N_m^{-1}\sum_l \theta_{i,l}^m (\psi_{m,l}(z) +z\frac{d \psi_{m,l}(z)}{d z}) ].
\end{equation}
Note that the derivative of the resolvent matrix $\mathbf{Q}^m(z)$ and $\mathbf{\tilde Q}^{m}(z)$ with respect to $z$ is
\begin{align*}
    \frac{d \mathbf{Q}^m(z)}{d z} &= \mathbf{Q}^m(z) \cdot \mathbf{Q}^m(z) = \mathbf{S}^m(z).
\end{align*}
Applying the result of Lemma~\ref{lemma:Q in hachem},
we have 
$\boldsymbol{\lambda}_m(z)=\frac{d \boldsymbol{\phi}_m(z)}{d z}.$
Inserting this into \eqref{eq: derive phi}
yields the fixed-point equation for $\lambda_{m,k}(z)$.
The corresponding equation for $\mu_{m,j}(z)$ follows
immediately by symmetry and is omitted for brevity.

\end{document}